\newcommand{\Init}{\mathrm{Init}}
\newcommand{\Unsafe}{\mathrm{Unsafe}}
\newcommand{\comment}[1]{} 
\newcommand{\Ext}[1]{} 
\newcommand{\AL}[2]{#1} 
\newcommand{\Ceil}[1]{\lceil#1\rceil}
\newcommand{\Cont}[1]{\overline{#1}}
\newcommand{\RR}{\mathbb{R}}
\title{Combined Global and Local Search\\for the Falsification of Hybrid Systems
\thanks{This work was supported by the Czech Science Foundation (GA{\v C}R) grant number P202/12/J060 with institutional support RVO:67985807.}
}
\author{Jan Ku{\v r}{\'a}tko\inst{1}\inst{2} and Stefan Ratschan\inst{1}\thanks{ORCID: 0000-0003-1710-1513}}
\institute{Institute of Computer Science, Academy of Sciences of the Czech Republic \and Faculty of Mathematics and Physics, Charles University in Prague, Czech Republic}
\begin{document}

\maketitle

\begin{abstract}
In this paper we solve the problem of finding a trajectory that shows that a given hybrid dynamical system with deterministic evolution leaves a given set of states considered to be safe. The algorithm combines local with global search for achieving both efficiency and global convergence. In local search, it exploits derivatives for efficient computation. Unlike other methods for falsification of hybrid systems with deterministic evolution, we do not restrict our search to trajectories of a certain bounded length but search for error trajectories of arbitrary length.
\end{abstract}

\section{Introduction}

In this paper we provide an algorithm that solves the problem of unbounded safety falsification of hybrid systems with deterministic evolution. This means that, given a hybrid system with deterministic evolution, and a set of initial and a set of unsafe states, we search for a trajectory of arbitrary length starting in an initial state and ending in an unsafe state.

Existing methods for falsification of hybrid systems with deterministic evolution roughly fall into the following two categories:
\begin{itemize}
\item Local search~\cite{Abbas:11,Zutshi:13}: Such methods use local optimization to incrementally bring a starting trajectory closer to an error trajectory, ideally based on information on the derivative of the objective
  function. The advantage of local search is its relative efficiency. The disadvantage is that for convergence it needs to be started close enough to an error trajectory. At the very least it needs to start from a sequence of
  modes that contains an error trajectory. However, the number of sequences of modes grows exponentially with the length of the sequence which makes the search for starting trajectories for local search a difficult problem.
\item Black-box global search~\cite{Nghiem:10,Yashwant:11}: Such methods search for error trajectories globally, but use black-box optimization techniques~\cite{Gendreau:10,Rios:12} that do not explicitly exploit the structure specific to hybrid systems (partially continuous behavior, unbounded time variable). This extends their applicability (e.g., to Simulink models), but this may also result in loss of efficiency and restrict search to trajectories up to a given user-provided length. Of course it is possible to repeatedly restart such methods with higher upper bounds on the trajectory length, but every restart loses the information computed before. 
\end{itemize}

The contribution of this paper is an algorithm that combines the scalability of local search with global convergence for error trajectories of \emph{unbounded} length. Moreover, the resulting algorithm is reasonably simple and easy to analyze and implement. Note however, that efficiency is not primary goal of this paper--- since the generic structure of the resulting algorithm allows the simple incorporation more sophisticated global search techniques~\cite{Gendreau:10,Locatelli:13}.\AL{

Of course, one can use local search from any result of an algorithm based
on black-box global search, but this has the following drawbacks:}
{ Of course, one can use local search from any result of an algorithm based
on black-box global search, but this has the following drawbacks:}
\begin{itemize}
\item It is not clear how to handle the unbounded time variable.
\item Black-box global search does not explicitly exploit the structure of hybrid systems.
\item Black-box global search does not exploit the fact that it is combined with a local search method and hence may both duplicate some of the efforts of local search and fail to steer its search to good starting points for
  local search.
\end{itemize}

Our approach is based on a standard technique in global optimization for combining local with global search, so called two-phase methods~\cite{Schoen:02}. But we adapt those methods to the situation that we have here: A direct application of two-phase methods would use a  search space that is spanned by variables of two kinds:
the initial point of trajectories, and
the trajectory length (wrt. time). 
However, trajectory length is special, since it 
is unbounded, and since
computing a trajectory of the given length from a given initial point
  also computes all trajectories from that initial point with shorter length.
Moreover, hybrid systems combine continuous with discrete behavior and local search can exploit derivatives for searching the continuous part of the states space, but no such derivatives are available for discrete search which is another obstacle to the direct application of two-phase methods.

Hence, our approach modifies two-phase methods in such a way that---instead of treating trajectory length as a problem variable---they build trajectories incrementally from trajectory segments, and use derivative based continuous local search to glue together those segments based on continuous search (the literature on numerical algorithms for solving boundary value problems calls such an approach ``multiple shooting''~\cite{Ascher:95,Zutshi:13}).

The structure of the paper is as follows: In the next section we precisely define
the problem and introduce some basic definitions. In
Section~\ref{sec:algorithm} we introduce the main algorithm. In Section~\ref{sec:incrementality} we present an improved, more incremental version of the algorithm. In Section~\ref{sec:local-optimization} we describe how to do local search for
error trajectories. In Section~\ref{sec:terminator} we provide some termination proofs for the algorithm. In Section~\ref{sec:results} we present computational experiments. In Section~\ref{sec:related-work} we describe related
work, and in Section~\ref{sec:conclusion} we conclude the paper.

We thank Aditya Zutshi and Sriram Sankaranarayanan for interesting discussions on the
subject of this papers.

\section{Problem Formulation}
\label{sec:problem}

In this section we introduce notation and key concepts which we use, and present the problem we try to solve.
\begin{definition}
\label{def:hds}
A \textsf{hybrid dynamical system} is a quintuple $H = \left( Q, \Omega, F, G, R \right)$, where
\begin{itemize}
\item $Q$ is a finite set whose elements we call \emph{modes};
\item $\Omega \subseteq Q \times \RR^n$ (the \emph{state space} of the hybrid system);
\item $F$ assigns to each mode $q \in Q$ a system of differential equations $F_q(t, x, \dot{x}) = 0$, where $(q,x) \in \Omega$ and $t\in\RR^{\geq 0}$ is time;
\item $G \subseteq \Omega$ (the set of \emph{guards});
\item $R: \Omega \mapsto \Omega$ (the \emph{reset function}).
\end{itemize}
\end{definition}
For a given $q\in Q$, we will sometimes denote by $X_q$ the set $\{ x \mid (q, x)\in \Omega \}$.
\begin{definition}
\label{def:htra}
A \textsf{trajectory} of a hybrid dynamical system $H$ is a
sequence of the form $\left( (q_1, x_1), (q_2,
  x_2), \ldots, (q_k, x_k) \right)$, where $q_i \in Q$ and $x_i: [0,t_i]\mapsto
X_q$ is a continuous trajectory of the system of differential equations given by
$F_{q_i}$, $i = 1, \ldots, k$. For all $i\in \{ 1,\dots, k-1\}$, for all $t\in [0, t_i)$, not $G(q_i, x_i(t))$, but for the trajectory endpoints $G(q_i,x_i(t_i))$. Moreover, the starting points of subsequent trajectories are determined by the reset function, that is, $R\left((q_{i}, x_{i}(t_i)) \right)=(q_{i+1}, x_{i+1}(0))$.


We call $t_i\in\RR^{\geq 0}$ the \textsf{length} of $x_i$. Moreover, we denote by $(q_i, x_i^s) \in \Omega$ the starting point of a trajectory $(q_i, x_i)$ and $(q_i,x_i^e) \in \Omega$ its endpoint. \end{definition}
\Ext{
In this paper we only consider deterministic hybrid dynamical systems. By \emph{deterministic}, we mean that every trajectory $\left( (q_1, x_1), (q_2,
  x_2), \ldots (q_k, x_k) \right)$, is uniquely given by its initial state
$(q_1,x_1(0))$. Under this assumption, finding any trajectory of interest may be
reduced to finding its initial state $(q_1,x_1(0))$.} Now we are ready to formulate the problem of \emph{falsification of hybrid dynamical systems}.
\begin{problem}
\label{prob:fals}
Let $H$ be a hybrid dynamical system and $\Init \subset \Omega$, $\Unsafe \subset \Omega$ be two sets. The set $\Init$ is called the set of initial states and the set $\Unsafe$ is called the set of unsafe states. The problem of falsification of $H$ is to find any trajectory $\left( (q_1, x_1), (q_2,
  x_2), \ldots, (q_k, x_k) \right)$ of $H$ such that $(q_1,x_1^s) \in \Init$ and $(q_k,x_k^e) \in \Unsafe$. Such a trajectory is called an error trajectory of $H$.
\end{problem}
\Ext{There are two special cases. The first one is when $\Init \cap \Unsafe \neq \emptyset$. It may be difficult to check whether these sets intersect. However, when we find at least one point in $\Init \cap \Unsafe$ we are done. The second case, where there are no error trajectories in $H$, is more subtle because proving that a hybrid dynamical system features no error trajectories is undecidable \cite{Henzinger:98}.}

\section{Algorithm}
\label{sec:algorithm}

We now present the main algorithm. Throughout this section we will assume a given hybrid system $H$ with set of initial states $\Init$ and set of unsafe states $\Unsafe$.

Informally, we intend to transform Problem \ref{prob:fals} into the minimization of a cost function. A value of this cost function will measure how far or close a sequence of points in $\Omega$ is to an error trajectory. We will minimize this cost function until we find an error trajectory.


The algorithm will maintain a finite set of points $P\subseteq \Omega$ on which it will analyze the behavior of the given hybrid system $H$. We call sequences of elements from $P$ \emph{paths}. The algorithm will do this analysis by starting numerical simulations from points in $P$. Such a simulation will conclude that there is a trajectory from the starting point $p$ of a simulation to the endpoint $p'$. This will be stored in a relation $\rightarrow$ on $P$ that will relate all those points $p, p'$ in $P$ for which simulation showed that there is a trajectory from $p$ to $p'$ according to $H$. If there is a path from an initial point to an unsafe point according to the relation $\rightarrow$, we are done. However, since this is difficult to achieve, we allow paths of points in $P$ for which subsequent points are not in $\rightarrow$. In order to measure how far such a path is from being a trajectory we will now introduce a distance measure for points in $P$:


\begin{definition}
\label{def:dist_states}
Given a finite set of states $P\subseteq\Omega$ and a relation $\rightarrow\subseteq P\times P$, the
 \textsf{distance} $d((q,x), (q', x'))$, of states $(q,x)$ and $(q', x')$ in $P$, is
\begin{itemize}
\item $0$, if $(q,x)\rightarrow (q',x')$, otherwise
\item $\|x-x'\|$, if $q=q'$, and
\item $\infty$, otherwise.
\end{itemize}
\end{definition}
\AL{

Here, the symbol $\| \cdot \|$ denotes the Euclidean norm.

}
{Here, the symbol $\| \cdot \|$ denotes the Euclidean norm. }
Note that our distance function is not symmetrical because of the relation $\rightarrow$ that, in general, is not symmetrical which corresponds to the intuition that the existence of a trajectory from $p$ to $p'$ does not imply the existence of a trajectory from $p'$ to $p$.

We measure the difficulty of getting from an initial state of $H$ to a given state $(q,x)$, and from a given state $(q,x)$ to an unsafe state as follows:

\begin{definition}
\label{def:dist_sets}
For a state $(q, x) \in P$
\AL{
\begin{align*}
d_I\left((q, x)\right) & \equiv \inf_{u \in \Init} d\left( u, (q,x) \right), &
d_U\left((q, x)\right) & \equiv \inf_{u \in \Unsafe} d\left((q,x), u \right) .
\end{align*}
}
{ we put $d_I\left((q, x)\right) \equiv \inf_{u \in \Init} d\left( u, (q,x) \right)$ and $d_U\left((q, x)\right) \equiv \inf_{u \in \Unsafe} d\left((q,x), u \right)$.}
\end{definition}
\Ext{So, for a given state $(q,x) \in P$, if the mode $q$ does not contain any initial state and there is no relation $\rightarrow$ between any $u \in \Init$ and $(q,x)$, we put $d_I(q,x) = \infty$. Similarly, if the mode $q$ does not contain any unsafe state and there is no relation $\rightarrow$ between $(q,x)$ and any $u \in \Unsafe$, we put $d_U(q,x) = \infty$. }
\AL{

Now we model how close a path is to yielding an error trajectory, as follows:}
{Now we model how close a path is to yielding an error trajectory, as follows:}

\begin{definition}
\label{def:cost}
The \textsf{cost}
\AL
{ $c(p_1,\dots,p_n)$ of a path $(p_1,\dots,p_n)$ is 
\[
d_I(p_1) + \sum_{i=1}^{n-1} d(p_i,p_{i+1}) + d_U(p_{n}).
\]}
{ of a path $(p_1,\dots,p_n)$ is given by $c(p_1,\dots,p_n) = d_I(p_1) + \sum_{i=1}^{n-1} d(p_i,p_{i+1}) + d_U(p_{n})$.}
\end{definition}
Notice that we have an error trajectory of $H$ if the cost $c(p_1, \ldots, p_n)$
is equal to zero. \Ext{That means that there exists a trajectory $\left( (q_1, x_1), (q_2,
  x_2), \ldots (q_k, x_k) \right)$ with $(q_1, x_1^s) \in \Init$ and $(q_k, x_k^e) \in \Unsafe$ which passes through states $p_1, \ldots, p_n \in P$.} In practice, we are satisfied if the distances $d_I(p_1)$ and $d_U(p_{n})$ are zero and $\sum_{i=1}^{n-1} d(p_i,p_{i+1})<\varepsilon$ for some small threshold $\varepsilon$. 




Now we can formulate our method for falsification of hybrid dynamical systems.
In a similar way as two-phase methods~\cite{Schoen:02} the algorithm iterates between two phases for exploring the state space of a given hybrid dynamical system.

The first phase is local optimization. For a given path of finite cost from $P$ we compute another path with lower cost. In this phase we employ standard techniques for continuous optimization and use gradient information based on sensitivity analysis of hybrid dynamical systems~\cite{Hiskens:00}. If we find a local minimum which yields an error trajectory, we are finished. However, if the minimal cost is greater than a given threshold $\varepsilon$, we need to proceed to phase two and explore the state space further. The reader can find more details on the first phase in Section~\ref{sec:local-optimization}.

The second phase is called global exploration. If local optimization in the first phase does not produce an error trajectory, we add additional states to the set $P$. There are many options for adding new states such as a random sampling, states resulting from forward and backward simulation from existing states, states suggested by more sophisticated global search techniques~\cite{Gendreau:10,Locatelli:13}, and even states given by a designer of the system. 
\AL{

The complete Algorithm \ref{algo:fals} follows:}
{ The complete Algorithm \ref{algo:fals} follows:}

%
\begin{algorithm}[H]
\label{alg:1}
\SetAlgoNoLine
\DontPrintSemicolon
\KwIn{a set of states $P\subseteq\Omega$ and a relation $\rightarrow\subseteq P\times P$ s.t.
  \begin{itemize}
  \item $p\rightarrow p'$ implies that there is a trajectory from $p$ to $p'$ in $H$
  \item there is a path of points in $P$ that has finite cost with respect to $\rightarrow$
  \end{itemize}}
\KwOut{an error trajectory} 
\vspace*{0.5cm}
\While{local optimization of the path with minimal cost does not yield an error trajectory}{
add a new state $r \in \Omega$ to the set $P$\;
\For{some $p \in P$}{
simulate forward from $p$ for some time to a new state $p'$\; $\rightarrow\; := \ \rightarrow \cup \{(p, p') \}$\; 
}
\For{some $p \in P$}{
simulate backward from $p$ for some time to a new state $p'$\; $\rightarrow\; := \ \rightarrow \cup \{(p', p) \}$\;
}
}
\caption{\AL{Combined Global and Local Search for the Falsification of Hybrid Systems}
{ Combined Global and Local Search for the Falsification}}
\label{algo:fals}
\end{algorithm}
\AL{}{\noindent}
The second requirement on the input (existence of a path of finite cost) allows us to use derivative based continuous local optimization from the beginning. For fulfilling this requirement, we observe that paths can only have infinite cost due to sub-sequent states in different modes that are not connected by the relation $\rightarrow$. We can make this more precise by the following property:

\begin{property}
  Assume a set $P\subseteq \Omega$ and $\rightarrow\subseteq P\times P$. Let $\rightarrow_Q\subseteq P\times P$ be such that $(q, x)\rightarrow_Q (q',x')$ iff $q=q'$ or $(q,x)\rightarrow (q',x')$, and let $\rightarrow_Q^*$ be the transitive closure of $\rightarrow_Q$. If $P$ contains at least one initial state $p$ and one unsafe state $p'$ such that $p\rightarrow_Q^*p'$, then there is a path of points in $P$ that has finite cost with respect to $\rightarrow$
\end{property}

The necessary elements of $\rightarrow$ can be easily formed by pairs $(p, p')$ such that 
 $G(p)$ and $p'=R(p)$. For example, for each pair of modes $(q, q')$ for which there are $x$ and $x'$ s.t. $G(q,x)$ and $(q',x')=R(q,x)$ we could add such $(q,x)$ and $(q',x')$. If $H$ has an error trajectory then this fulfills the assumptions of the above property resulting in a path of finite cost.

The algorithm stops when we find a path whose cost is lower than some threshold $\varepsilon$. A concrete implementation might add another stopping criterion, for example stating a maximum number of states in $P$ in order to ensure termination for inputs that do not feature any error trajectory. 



\section{Algorithmic Details}
\label{sec:incrementality}

\subsection{Computation of the Path of Minimal Cost}
We make the following observation: In Algorithm~\ref{alg:1}, one can view the problem of computation of a
path of minimal cost as a problem on weighted directed graphs: The vertices of the graph are formed by the elements of the set $P$ and there is an edge from $p\in P$ to $p'\in P$ iff the distance $d(p,p')$ is finite. The weight of this edge is given by this value $d(p,p')$. Now the path of minimal cost is the shortest path in this graph from an initial to an unsafe state. This is a classical problem in algorithm theory with solutions such as the Floyd-–Warshall algorithm. 

Examining the situation more closely, we observe that our problem is neither of
the all-pair shortest path, nor of the single-source shortest path
kind. Instead, the paths have to start in a certain given set (call it $S$ for
source) and end in another given set (call it $G$ for goal). This can be reduced
to a problem with single vertices instead of sets by introducing two new,
auxiliary vertices $s$ and $g$ such that $s$ has an edge of zero cost to each
element of $S$ and such that there is an edge of zero cost from each element of
$G$ to $g$. 

Now we are left with a single-source single-goal shortest path problem (also called point-to-point shortest path). Of course, such problems can be solved by algorithms solving the single-source shortest path problem, for example, by Dijkstra's algorithm~\cite{Dijkstra:59}. But there are also specialized algorithms, for example, algorithm based on a bi-directional~\cite{Pohl:71,Bertsekas:98} application of Dijkstra's algorithm.

\subsection{Heuristics}

The algorithm can be instantiated with many heuristics resulting in special versions of Algorithm~\ref{alg:1}, for example:
\begin{itemize}
\item Forward version: only add initial points and only do forward simulation
\item Backward version: only add unsafe points and only do backward simulation
\item Complete random search version: never prolong existing simulations, only simulation from newly added points. 
\end{itemize}

The algorithm also leaves open the length of the employed simulations. A simply possibility is to fix a certain length at the beginning and stick to it throughout computation\comment{\footnote{more details: advantage of short vs. long simulations, smarter approaches, e.g. cancel simulation in points where distance of guard of unsafe set is minimal}}. 

Note that simulation might run into problems, for example due to Zeno behavior, or due to the fact that it leaves the state space of the given hybrid system. In this case we simply ignore the result of simulation and continue with the algorithm. See also more on this at the end of Section~\ref{sec:local-optimization}.

It is also possible to use information from verification tools here. Especially, one can restrict the choice of points to an abstraction computed by a verification tool~\cite{Dzetkulic:11,Ratschan:09a}.

\subsection{Paths of Minimal Cost}

We will now investigate the form of paths of minimal cost. For a given hybrid system $H$ we assume the following.
\begin{enumerate}
\item The sets $\{ x \mid (q, x) \in \Init \}$ and $\{  x \mid (q, x) \in \Unsafe \}$ are closed and convex.
\item For all $q \in Q$ the set $\{ x \mid (q, x) \in \Omega \}$ is convex.
\item For $p, p' \in P$, $p \rightarrow p'$ implies there is a trajectory from $p$ to $p'$ in $H$.
\item There is at least one path of finite cost in $P$ with respect to $\rightarrow$\;.
\end{enumerate}
\begin{lemma}
\label{lem:SSiM}
Let $H$ be a given hybrid system and $P$ be a set of states such that assumptions 1.--4 hold. Let $(p_1, \ldots, p_n)$ be the path of minimal cost. Let $r$ be a state such that $r \neq p_i$, $i = 1, \ldots, n$, and neither $r \rightarrow p_i$ nor $p_i \rightarrow r$ for any $i\in \{ 1,\dots,n\}$. Then the cost of a path which is formed by either including state $r$ in $(p_1, \ldots, p_n)$ or substituting $r$ for any $p_i$'s, $i \in \{ 1, \ldots, n\} $, in $(p_1, \ldots, p_n)$ is greater or equal to $c(p_1, \ldots, p_n)$.
\end{lemma}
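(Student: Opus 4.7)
The plan is to go through each possible modification of $(p_1,\ldots,p_n)$---insertion of $r$ at the start, in the middle between some $p_j$ and $p_{j+1}$, or at the end, and substitution of $r$ for some $p_j$---and to verify in each case that the cost cannot decrease. A key preliminary observation is that, because $r$ is unrelated by $\rightarrow$ to any $p_i$, every new distance of the form $d(p_i,r)$ or $d(r,p_i)$ is either $\|x_{p_i}-x_r\|$ (when $r$ shares a mode with $p_i$) or $\infty$ (when it does not). Whenever an $\infty$ appears the modified cost is trivially at least the original one, so I may assume throughout that the relevant points and $r$ share a mode and all Euclidean terms are finite.

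For insertion of $r$ between $p_j$ and $p_{j+1}$ the claim reduces to $d(p_j,r)+d(r,p_{j+1}) \geq d(p_j,p_{j+1})$, which is immediate from the Euclidean triangle inequality together with the fact that $d \leq \|\cdot\|$ whenever the modes coincide.

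The remaining cases---insertion at the endpoints and all substitutions---are more delicate because they involve $d_I$, $d_U$, or the minimality hypothesis. I treat insertion at the start as the representative case: the inequality to establish is $d_I(r)+d(r,p_1) \geq d_I(p_1)$. I case-split on how the infimum defining $d_I(r)$ is achieved. First, if $d_I(r)=0$ because some $u\in\Init\cap P$ satisfies $u\rightarrow r$, then $(u,r,p_1,\ldots,p_n)$ is a valid path in $P$ of cost $d(r,p_1)+\sum_{i=1}^{n-1}d(p_i,p_{i+1})+d_U(p_n)$, and minimality of $(p_1,\ldots,p_n)$ forces $d_I(p_1) \leq d(r,p_1)$. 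Second, if $d_I(r)=0$ because $r\in\Init$, then $r$ itself is admissible in the infimum defining $d_I(p_1)$, which yields $d_I(p_1) \leq \|x_r-x_{p_1}\|=d(r,p_1)$. Finally, if $d_I(r)>0$, the closed-convex assumption on $\Init$ guarantees that the infimum is attained at some $(q_r,y^\star)\in\Init$; taking that same point in the infimum for $d_I(p_1)$ and applying the Euclidean triangle inequality gives $d_I(p_1) \leq \|y^\star-x_{p_1}\| \leq \|y^\star-x_r\|+\|x_r-x_{p_1}\|=d_I(r)+d(r,p_1)$. Insertion at the end is symmetric, with $\Unsafe$ in place of $\Init$.

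Substitutions follow the same pattern, but the alternative path used to invoke minimality is now the one obtained by \emph{deleting} the replaced $p_j$ rather than extending the path. For substitution in the middle, the deletion path $(p_1,\ldots,p_{j-1},p_{j+1},\ldots,p_n)$ lies in $P$ and gives $d(p_{j-1},p_j)+d(p_j,p_{j+1}) \leq d(p_{j-1},p_{j+1}) \leq \|x_{p_{j-1}}-x_{p_{j+1}}\|$, after which the Euclidean triangle inequality through $r$ finishes the job. For substitution at the start, deleting $p_1$ yields $d_I(p_1)+d(p_1,p_2) \leq d_I(p_2)$, and the same three-way case split on $d_I(r)$ as above produces $d_I(r)+d(r,p_2) \geq d_I(p_2)$; chaining the two inequalities gives the desired bound. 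Substitution at the end is again symmetric. The main obstacle in the whole proof is bookkeeping: in every sub-case one has to remember whether each individual distance is $0$ (through $\rightarrow$, possibly crossing modes), a Euclidean distance (same mode), or $\infty$, and select the correct combination of the triangle inequality, the closed-convex bound on distance to $\Init$ or $\Unsafe$, and minimality of $(p_1,\ldots,p_n)$. Beneath the case analysis the guiding idea is the same everywhere: minimality forces the original path to be ``taut'', so detouring through $r$ can only lengthen it.
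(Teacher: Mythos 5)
Your proof is correct and follows essentially the same route as the paper's: the triangle inequality for insertions, minimality of the given path for substitutions, and closedness/convexity of $\Init$ and $\Unsafe$ to obtain an attained nearest point for the endpoint terms, with your explicit deletion/extension comparison paths merely making precise the minimality step that the paper only sketches. The one minor omission is that the lemma also allows substituting $r$ for several $p_i$'s at once, but your deletion argument extends verbatim to that case (compare with the path obtained by deleting the whole replaced block), so this is not a substantive gap.
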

The reader can find the proof \AL{of Lemma \ref{lem:SSiM} in the appendix}{of Lemma \ref{lem:SSiM} in the extended version of the paper}. The consequence of this lemma can be stated like this: Assuming 1.--4. the value of $c(p_1, \ldots, p_n)$, where $(p_1, \ldots, p_n)$ is the path of minimal cost, does not depend on states $p \in P$ which are in no relation to other states in the path wrt. $\rightarrow$. In other words, for a state $p_i$, $i = 1, \ldots, n$, which is in no relation with other states in a path, we have $c(p_1, \ldots, p_i, \ldots, p_n) = c(p_1, \ldots, p_{i-1}, p_{i+1}, \ldots, p_n)$.

This is important for finding paths of lower cost. Whenever we add a new state $r$ in Algorithm \ref{algo:fals} we should simulate either forward in time or backward in time to create a pair of states in a relation $\rightarrow$. Solitary states do not affect the resulting value of the cost function.
\begin{lemma}
\label{lem:3SoT}
Let $H$ be a given hybrid system and $P$ be a set of states such that assumptions 1.--4 hold. Let $(p_1, \ldots, p_n)$, $p_i \in P$, be the path of minimal cost. Let $p_j$, $j = 2, \ldots, n-1$, be a state such that $p_{j-1} \rightarrow p_j$ and $p_j \rightarrow p_{j+1}$. Then $d(p_{j-1}, p_{j+1}) = 0$. \qed
\end{lemma}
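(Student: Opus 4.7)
The plan is to prove Lemma \ref{lem:3SoT} by a deletion argument on the minimal-cost path, combined with the trajectory-composition semantics of the relation $\rightarrow$ guaranteed by Assumption 3.

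The first step is to unpack the hypotheses via Definition \ref{def:dist_states}: $p_{j-1}\rightarrow p_j$ and $p_j\rightarrow p_{j+1}$ immediately give $d(p_{j-1},p_j)=d(p_j,p_{j+1})=0$. I would then form the candidate path $\pi'=(p_1,\ldots,p_{j-1},p_{j+1},\ldots,p_n)$ obtained by deleting $p_j$. Every edge outside the neighbourhood of $p_j$ contributes the same term as in $\pi$, while the two zero-cost edges surrounding $p_j$ are replaced by the single edge $d(p_{j-1},p_{j+1})$, so
\[
c(\pi')=c(\pi)-d(p_{j-1},p_j)-d(p_j,p_{j+1})+d(p_{j-1},p_{j+1})=c(\pi)+d(p_{j-1},p_{j+1}).
\]
Minimality of $\pi$ forces $c(\pi)\le c(\pi')$, and hence $d(p_{j-1},p_{j+1})\ge 0$, which is only the easy half of the claim.

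For the reverse inequality I would invoke Assumption 3 together with trajectory concatenation. The hypothesis $p_{j-1}\rightarrow p_j$ produces a trajectory of $H$ from $p_{j-1}$ to $p_j$, and $p_j\rightarrow p_{j+1}$ produces one from $p_j$ to $p_{j+1}$. Since the endpoint of the first equals the starting point of the second and $H$ is deterministic, these can be glued into a single trajectory from $p_{j-1}$ directly to $p_{j+1}$. Reading $\rightarrow$ as saturated under such composition -- which matches the algorithm's behaviour, where forward simulation from $p_{j-1}$ may simply be continued past $p_j$ to certify the pair $(p_{j-1},p_{j+1})$ -- this places $(p_{j-1},p_{j+1})\in\rightarrow$, giving $d(p_{j-1},p_{j+1})=0$.

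The main obstacle is precisely this closure step: the raw definition of $\rightarrow$ is not automatically transitive, so the ``$\le 0$'' direction relies on interpreting $\rightarrow$ as closed under the composition of trajectories, rather than on minimality alone. Once that closure (already implicit in Assumption 3 and the way simulations are generated in Algorithm \ref{algo:fals}) is granted, the deletion comparison above closes the argument and yields $d(p_{j-1},p_{j+1})=0$.
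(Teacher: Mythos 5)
Your substantive step is exactly the paper's proof: the paper disposes of the lemma in one line by asserting transitivity of $\rightarrow$, so $p_{j-1}\rightarrow p_{j+1}$ and hence $d(p_{j-1},p_{j+1})=0$; your reading of $\rightarrow$ as closed under concatenation of trajectories is the same assumption, stated more explicitly (Assumption 3 gives only the soundness direction, so the paper too is implicitly relying on this closure, and your flagging of it is fair rather than a defect). The deletion/minimality argument in your first half is superfluous --- it only yields $d(p_{j-1},p_{j+1})\geq 0$, which holds trivially by the definition of the distance --- and indeed the paper's proof never uses minimality of the path at all.
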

\begin{proof}
Due to transitivity of the relation $\rightarrow$, we have $p_{j-1} \rightarrow p_{j+1}$ which gives us $d(p_{j-1}, p_{j+1}) = 0$.
\end{proof}
Lemma~\ref{lem:3SoT} presents us with a choice for the application of local optimization. If a path contains such a triplet $(p_{j-1}, p_j, p_{j+1})$, we may either work with them as two separate hybrid trajectories or we may consider a hybrid trajectory which is formed by their connection, thus removing the intermediate state from a path.  On the other hand, we also have the option to split a hybrid trajectory into shorter hybrid trajectories before passing it to the local optimizer.


In the first approach we work with shorter trajectories however the resulting optimization problem has higher dimension. The latter case may, on the other hand, cause problems because it is less numerical stable~\cite{Ascher:95}. The choice depends on the system of differential equations that governs the evolution of hybrid system $H$.

\section{Local Optimization}
\label{sec:local-optimization}
In Algorithm \ref{algo:fals}, after we form a path $(p_1, \ldots, p_n)$ of finite cost  we  try to find another path of smaller cost using local search. Therefore, we solve the minimization problem in which we seek new states $\hat{p}_1, \ldots, \hat{p}_n$ which yield a path of lower cost than $(p_1, \ldots, p_n)$. Eventually, such a path of minimal cost may correspond to an error trajectory of a hybrid system.

The efficiency of such local search can be improved by exploiting the gradient of the cost $c(p_1, \ldots, p_n)$. In this section  we will develop explicit formulae for the gradient of the cost function which will allow us to use efficient off-the-shelf tools for gradient-based numerical optimization to minimize the cost function.

Without loss of generality, we will assume that for all $i\in\{1,\dots,n\}$, $p_i\rightarrow p_{i+1}$ iff $i$ is odd. This can be easily achieved, since, due to Lemma~\ref{lem:SSiM} if there are solitary states that we can remove them from $(p_1, \ldots, p_n)$ without changing the value of the cost function. \Ext{Moreover, for states $p_{j-1}$, $p_j$, $p_{j+1}$ with $p_{j-1}\rightarrow p_j\rightarrow p_{j+1}$, we can either duplicate $p_j$ or remove $p_j$, depending whether we prefer longer or shorter trajectories in the sub-sequent local optimization process.}

Note that in contrast to early work~\cite{Zutshi:13}, in cases where $p_i\rightarrow p_{i+1}$, $p_i$ and $p_{i+1}$ are \emph{not} restricted to be in the same mode. Moreover, points are not restricted to reside in guards of the hybrid system.



We now give explicit formulae for computation of the gradient of the cost function $c(p_1, \ldots, p_n)$. Let us start with the definition of the length of a trajectory $\left( (q_1,x_1), \ldots, (q_k, x_k) \right)$ and its sensitivity to the change of its initial state $(q_1,x_1^s)$ which is essential for evaluation of the gradient of the cost $c(p_1, \ldots, p_n)$.

\begin{definition}
\label{def:length}
The \textsf{length} of a trajectory $\left( (q_1,x_1), \ldots, (q_k, x_k) \right)$ is defined to be the sum $t_f  =  \sum_{i=1}^kt_i$, where $t_i$ is the length of $x_i$ for  $i = 1, \ldots, k$.
\end{definition}
\Ext{Such a definition is natural because each segment $(q_i, x_i)$ of a trajectory is given by the continuous trajectory $x_i$ of the system of differential equations $F(q_i)$ on a time interval $[0,t_i]$. Therefore, in order to get from the state $(q_1,x_1^s)$ to the state $(q_k, x_k^e)$ we need $t_f$ units of time.}
\begin{definition}
\label{def:M}
We define a function $M: \RR \times \Omega \mapsto \Omega$ such that for a state $(q,x)\in\Omega$
and $t \in [0, t_f]$ we have 
\AL{
\[
M(t, (q, x)) =(q',x')\  ,
\]}
{$M(t, (q, x)) =(q',x')$, }
where $(q',x')$ is the end-state of the trajectory of length $t$ whose initial state is $(q,x)$. In cases where a reset happens at time $t$ (which results in the trajectory of length $t$ being non-unique), we choose the unique point before the reset.
\end{definition}
\begin{definition}
\label{def:sen}
The \textsf{sensitivity} of a trajectory $\left( (q_1,x_1), \ldots, (q_k, x_k) \right)$ of $H$ to the initial state $(q_1, x_1^s)$ is a function $S: \RR^{\geq 0} \mapsto \RR^{n \times n}$ such that
\[
S(t) \equiv \frac{\partial M(t, (q, x_1^s))}{\partial x_1^s}, \quad t \in [0, t_f].
\]
\end{definition}
With this sensitivity function we can measure how the states on a hybrid trajectory are affected when we change its initial state. An important observation is that $S(0)$ is the \emph{identity} matrix. However, for hybrid systems, the function $M$ need not be differentiable everywhere, and so the sensitivity is not defined everywhere. Computation of the sensitivity function is subtle~\cite{Hiskens:00}. \Ext{The main obstacle is updating the sensitivity function at discrete events. When we leave one mode and enter another, we need to update the sensitivity function according to formula (57) in~\cite{Hiskens:00}. Obtaining the sensitivity function to an initial state in the classical continuous case, that is, when we consider only one mode, is described in~\cite[p. 99--102]{Khalil:02}.} 
\AL{

For our path $(p_1,\dots,p_n)$, for certain $t_i$,  $i = 1, 3, \ldots, n-1$, we have
\[
M(t_1, p_1)  = p_2, \quad
M(t_3, p_3) = p_4, \quad 
\ldots, \quad
M(t_{n-1}, p_{n-1})  = p_n\ .
\]
In the sequel let us use the following notation: For any state $(q, x)\in\Omega$, we denote by $\Cont{(q, x)}$ its continuous part $x$. 

}
{ In the sequel let us use the following notation: For any state $(q, x)\in\Omega$, we denote by $\Cont{(q, x)}$ its continuous part $x$. 

For our path $(p_1,\dots,p_n)$, for certain $t_i$,  $i = 1, 3, \ldots, n-1$, we have
$
M(t_1, p_1)  = p_2,\  
M(t_3, p_3) = p_4, \ 
\ldots, \ 
M(t_{n-1}, p_{n-1})  = p_n$. }
Local search adjusts the position of the initial state of each trajectory together with its length such that the cost is minimized. It uses the gradient of the cost with respect to $\Cont{p_i}$ and lengths $t_i$ for $i < n$ odd. Therefore the gradient is given by the partial derivatives $\frac{\partial c}{\partial \Cont{p_i}}(p_1, \ldots, p_n)$ and $\frac{\partial c}{\partial t_i}(p_1, \ldots, p_n)$, $i< n$ odd. 

We will illustrate the whole process of computing the gradient of the cost function for one particular definition of distances $d_I, d(\cdot, \cdot)$ and $d_U$ that avoids solving another minimization problem stemming from Definition~\ref{def:dist_sets}. Hence, we put $d_I(p_1)$ and $d_U(p_n)$ to be weighted norms to some fixed states in $\Init$, and $\Unsafe$ respectively.
This amounts to the sets $\Init$ and $\Unsafe$ being ellipsoids.
 We denote by $u\in\Omega$ and $v\in\Omega$ the centres of these ellipsoids and by $E_I, E_U$ symmetric positive definite matrices which characterize the size and shape of sets $\Init$ and $\Unsafe$. \AL{Then we consider the cost of the following special form
\begin{align*}
c(p_1, \ldots, p_n) & = d_I(p_1) + \sum_{i=1}^{n-2}d(p_i, p_{i+1}) + d_U(p_n)\ ,\\
 & = \| \Cont{p_1} - \Cont{u}\|_{E_I}^2 + \sum_{i \ \mathrm{even}}^{n-2}\| \Cont{p_i} - \Cont{p_{i+1}} \|^2 + \|\Cont{p_n} - \Cont{v}\|_{E_U}^2\ .
\end{align*}}
{

Then we consider the cost of the following special form: $c(p_1, \ldots, p_n)  = d_I(p_1) + \sum_{i=1}^{n-2}d(p_i, p_{i+1}) + d_U(p_n)  = \| \Cont{p_1} - \Cont{u}\|_{E_I}^2 + \sum_{i \ \mathrm{even}}^{n-2}\| \Cont{p_i} - \Cont{p_{i+1}} \|^2 + \|\Cont{p_n} - \Cont{v}\|_{E_U}^2$. }
When we use the function $M: \RR \times \Omega \mapsto \Omega$ from Definition \ref{def:M}, then the cost becomes dependent on $p_i$ and $t_i$, $i = 1, 3, 5, \ldots, n-1$,
\AL{
\begin{align*}
 c(p_1, p_3, \ldots, p_{n-1}, t_1, t_3, \ldots, t_{n-1}) & = \| \Cont{p_1} - \Cont{u}\|_{E_I}^2 + \ldots \\
 &  + \sum_{i \ \mathrm{even}}^{n-2}\| \Cont{M(t_{i-1},p_{i-1})} - \Cont{p_{i+1}} \|^2 + \ldots \\
 & +  \|\Cont{M(t_{n-1},p_{n-1})} - \Cont{v}\|_{E_U}^2 \ .
\end{align*}}
{ and $c(p_1, p_3, \ldots, p_{n-1}, t_1, t_3, \ldots, t_{n-1})  = \| \Cont{p_1} - \Cont{u}\|_{E_I}^2 +  \sum_{i \ \mathrm{even}}^{n-2}\| \Cont{M(t_{i-1},p_{i-1})} - \Cont{p_{i+1}} \|^2 +  \|\Cont{M(t_{n-1},p_{n-1})} - \Cont{v}\|_{E_U}^2$.

}
\AL{We can compute the gradient of the cost $ c(p_1, p_3, \ldots, p_{n-1}, t_1, t_3, \ldots, t_{n-1})$ which consists of the following partial derivatives
\begin{align*}
\frac{\partial c}{\partial \Cont{p_1}} & = 2\left[(\Cont{p_1} - \Cont{u})^TE_I + \left(\Cont{M(t_1, p_1)} - \Cont{p_3}\right)^T\frac{\partial M}{\partial \Cont{p_1}}(t_1, p_1)\right] 
\end{align*}
and for odd $i$ with $1 < i < n-1$
\begin{align*}
\frac{\partial c}{\partial \Cont{p_i}} & =-2\left[\left(\Cont{M(t_{i-2}, p_{i-2})} - \Cont{p_i}\right)^T - \left(\Cont{M(t_i, p_i)} - \Cont{p_{i+2}}\right)^T\frac{\partial M}{\partial \Cont{p_i}}(t_i, p_i)\right]
\end{align*}
with
\begin{align*}
\frac{\partial c}{\partial \Cont{p_{n-1}}} & = -2\left[\left(\Cont{M(t_{n-2}, p_{n-2})} - \Cont{p_{n-1}}\right)^T - \right. \\
& - \left.  \left(\Cont{M(t_{n-1},p_{n-1})} - \Cont{v}\right)^TE_U\;\frac{\partial M}{\partial \Cont{p}_{n-1}}(t_{n-1}, p_{n-1})\right] \ .
\end{align*}
For odd $i < n-1$ we put
\begin{align*}
 \frac{\partial c}{\partial {t_i}} & = 2(\Cont{M(t_{i}, p_{i})}-\Cont{p_{i+1}})^T\frac{\partial M}{\partial t_i}(t_i, p_i)\ 
 \end{align*}
 and
 \begin{align*}
\frac{\partial c}{\partial {t_{n-1}}} & = 2(\Cont{M(t_{n-1}, p_{n-1})}-\Cont{v})^TE_U\frac{\partial M}{\partial t_{n-1}}(t_{n-1}, p_{n-1})\ .
\end{align*}}
{We can compute the gradient of the cost $ c(p_1, p_3, \ldots, p_{n-1}, t_1, t_3, \ldots, t_{n-1})$ which consists of the following partial derivatives
\[
\frac{\partial c}{\partial \Cont{p_1}}  = 2[\Cont{p_1} - \Cont{u}]^TE_I +2 \left[\Cont{M(t_1, p_1)} - \Cont{p_3}\right]^T\frac{\partial M}{\partial \Cont{p_1}}(t_1, p_1)
\]
and for odd $i$ with $1 < i < n-1$ we have
\[
\frac{\partial c}{\partial \Cont{p_i}}  = - 2\left[\Cont{M(t_{i-2}, p_{i-2})} - \Cont{p_i}\right]^T +  2\left[\Cont{M(t_i, p_i)} - \Cont{p_{i+2}}\right]^T\frac{\partial M}{\partial \Cont{p_i}}(t_i, p_i)
\]
with the last term
\begin{align*}
\frac{\partial c}{\partial \Cont{p_{n-1}}}  & = 2\left[\Cont{M(t_{n-1},p_{n-1})} - \Cont{v}\right]^T\!E_U\frac{\partial M}{\partial \Cont{p}_{n-1}}(t_{n-1}, p_{n-1}) \\ 
& -2\left[\Cont{M(t_{n-2}, p_{n-2})} - \Cont{p_{n-1}}\right]^T\ .
\end{align*}
For odd $i < n-1$ we put 
\[
\frac{\partial c}{\partial t_i}  = 2[\Cont{M(t_{i}, p_{i})}-\Cont{p_{i+1}}]^T\frac{\partial M}{\partial t_i}(t_i, p_i)
\]
 and the last term is
 \[
 \frac{\partial c}{\partial {t_{n-1}}}  = 2\left[\Cont{M(t_{n-1}, p_{n-1})}-\Cont{v}\right]^TE_U\frac{\partial M}{\partial t_{n-1}}(t_{n-1}, p_{n-1})\ .
 \] }
 In addition we may introduce weights into the cost function to scale the problem. 
\Ext{Notice our redefined distances $d_I$ and $d_U$ function as penalty terms so we do not need to solve the minimization problem in Definition \ref{def:dist_sets}.
Using a penalty term for the set of initial and unsafe states is a common
technique in numerical optimization~\cite[Section~17]{Nocedal:99}. Such penalty
terms may in some situations change the local optima of optimization problems
(so-called ''inexact penalty terms''). However, in situations like ours, where
simplicity and speed are more important than finding an exact optimum, such
methods are frequently used. }

\AL{Now we can use numerical optimization algorithms with the cost function $c$ and the gradients defined as above, to do local search for paths of minimal cost.
If started close enough to an error trajectory, and if the hybrid system is
sufficiently well-behaved around the error trajectory, such local search will converge
(usually quickly). However, if this is not the case, local search may fail, due to various reasons:}
{Now we can use numerical optimization algorithms with the cost function $c$ and its gradient to do local search for paths of minimal cost.
If started close enough to an error trajectory, and if the hybrid system is
sufficiently well-behaved around the error trajectory, such local search will converge
(usually quickly). However, if this is not the case, local search may fail, due to various reasons:}
\begin{itemize}
\item It may run in a local minimum that is not an error trajectory.
\item There may be problems due to the fact that the sensitivity is not
  everywhere continuously differentiable. This corresponds to the situation
  where a trajectory is tangential to the boundary of a guard~\cite{Hiskens:00}.
\item Well-known problems with simulation~\cite{Mostermann:99} of hybrid
  systems might arise. For example, the simulation might run into Zeno behavior,
  or the ODE solver is unable to start close to the boundary of a guard. 
\item Optimization may result in trajectories that leave the state space of the
  hybrid system. 
\end{itemize}

In all such cases, we simply terminate local optimization and continue with the
global phase of the main algorithm.

\comment{\footnote{We might also say something about convergence speed.}}

\section{Termination Proof}
\label{sec:terminator}

We assume a hybrid system $H$ with the following properties:

\begin{itemize}
\item The state space of $H$ is compact.
\item There exists an error trajectory $E$ with final point in the interior of the set of unsafe points, and an $\varepsilon>0$ such that starting local search from any sequence of hybrid trajectories with cost not bigger than $\varepsilon$ converges to an error trajectory.
\item There exists a tube around $E$ such that trajectories starting in this tube depend continuously on their initial value (note that for ODEs this can be ensured by Lipschitz continuous right-hand sides).
\end{itemize}

Moreover, we will study a variant of the algorithm with the following properties:

\begin{itemize}
\item The algorithm does at least one forward simulation in each cycle, always of length (in time) $T$.
\item The algorithm chooses the starting point for its simulations randomly using a distribution that is non-zero on the whole state space.
\item If a simulation hits an unsafe state, it finishes (so, in such cases, the length of the simulation may be shorter than $T$).
\item The simulations are exact, that is, we ignore rounding and discretization errors of ODE solvers.
\end{itemize}

Note that the assumptions are asymmetrical wrt. time and set of initial vs. unsafe states. This is necessary since simulations have to be done in a certain direction, and since convergence requires simulations to be stopped if reaching an unsafe state.

While it is obvious that such an algorithm will densely fill the state space of the hybrid system with initial values of simulation, it is not obvious that this will eventually result in a path of small enough cost, since---from a given initial value---the trajectories follow the dynamics of the hybrid system $H$. Still, we have:

\begin{theorem}
\label{thm:1}
Under the assumption above, the algorithm finds an  error trajectory with probability $1$.
\end{theorem}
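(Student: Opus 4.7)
The plan is to show that with probability $1$ the algorithm eventually builds a path in its maintained set $P$ whose minimal cost is at most the threshold $\varepsilon$ from the second assumption on $H$; then by that assumption local search converges to an error trajectory.

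First I would chop the reference error trajectory $E$ into $m = \lceil T_{\mathrm{tot}}/T \rceil$ consecutive pieces each of duration $T$ (the last possibly shorter), and let $s_j$ denote the starting point of the $j$-th piece in $\Omega$. So $s_1$ is the initial point of $E$ (hence in $\Init$), simulating forward from $s_j$ for time $T$ yields $s_{j+1}$ for $j<m$, and simulating from $s_m$ enters the interior of $\Unsafe$ in time $\le T$ where it is halted by the stopping rule. Inside the tube around $E$ trajectories depend continuously on their initial value, so for any prescribed tolerances I can choose open neighbourhoods $U_j\ni s_j$ such that: for $j<m$ every simulation started in $U_j$ stays inside the tube for time $T$ and its endpoint lies in a ball $B_{j+1}$ around $s_{j+1}$ of radius as small as desired, inside the same mode as $s_{j+1}$; and for $j=m$ every simulation started in $U_m$ reaches the interior of $\Unsafe$ within time at most $T$ and therefore terminates in $\Unsafe$. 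The neighbourhoods can be chosen by a finite downward induction (starting from the tolerance demanded at $U_m$) since there are only $m$ of them.

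Next I would estimate the cost of the path formed by samples $p_{2j-1}\in U_j$ together with the corresponding simulation endpoints $p_{2j}$: by construction $p_{2j-1}\rightarrow p_{2j}$ so $d(p_{2j-1},p_{2j})=0$; since $s_1\in\Init$ and $\Init$ is a set, $d_I(p_1)\le \|\Cont{p_1}-\Cont{s_1}\|$; each $d(p_{2j},p_{2j+1})$ equals $\|\Cont{p_{2j}}-\Cont{p_{2j+1}}\|$ because both points lie in the same mode near $s_{j+1}$; and $d_U(p_n)=0$ because the last simulation was stopped inside $\Unsafe$. Choosing the $U_j$ small enough at the outset, the total cost is at most $\varepsilon$.

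It remains to show that almost surely the set $P$ eventually contains such samples. The random sampling distribution has positive density on the whole (compact) state space, so each $U_j$ is drawn in any single iteration with probability at least some $q_j>0$; the draws across iterations are independent. By Borel–Cantelli each $U_j$ is drawn infinitely often almost surely, so almost surely there is a finite iteration by which every $U_j$ has been drawn at least once. From that iteration on the minimum-cost path in the graph underlying $(P,\rightarrow)$ has cost $\le\varepsilon$, and local search on it converges to an error trajectory, terminating the algorithm.

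The main obstacle I expect is a careful use of the continuous-dependence assumption across mode transitions and resets: I need to guarantee that neighbouring simulations genuinely traverse the same sequence of guards as $E$ (so that $p_{2j}$ and $p_{2j+1}$ are in the same mode, making $d$ finite rather than $\infty$) and that the last simulation does stop in $\Unsafe$ rather than slip past it. The ``tube around $E$'' hypothesis together with $E$'s unsafe endpoint lying in the interior of $\Unsafe$ is precisely what rules out grazing guards and boundary pathologies, and a compactness argument along $E$ lets the $U_j$ be chosen uniformly to preserve the mode sequence.
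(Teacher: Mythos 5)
Your proposal is correct and follows essentially the same route as the paper's proof: partition the error trajectory $E$ into pieces of duration $T$, use the tube/continuous-dependence assumption to make each simulated segment track $E$ closely, use the interior-of-$\Unsafe$ assumption (the $\mu$-ball) to force the last simulation to stop inside $\Unsafe$, invoke the positivity of the sampling distribution to hit the required neighbourhoods with probability $1$, and conclude via the assumption that local search converges from any path of cost at most $\varepsilon$ (noting, as the paper does, that the minimal-cost path then also has cost at most $\varepsilon$). The only difference is presentational: you fix the target neighbourhoods $U_j$ in advance by backward induction and apply a Borel--Cantelli argument, whereas the paper argues sequentially along $E$, splitting $\varepsilon$ equally among the $\Ceil{L/T}+1$ summands of the cost; this is a harmless (arguably slightly cleaner) reorganization of the same argument.
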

The reader can find \AL{the proof in the appendix}{proofs of Theorems \ref{thm:1} and \ref{thm:2} in the extended version of the paper}. Clearly one can easily get a dual version of the theorem and proof by turning around the time axis, switching initial and unsafe states etc. 

Only slightly changing the proof, one can prove the following non-probabilistic version of the theorem:

\begin{theorem}
\label{thm:2}
  Take the same assumptions as the previous theorem with the exception of choice
  of starting points of simulations. Instead of a choice according to some probability distribution, assume a choice of those starting points that fulfills the following property: For each $\varepsilon>0$, there is an integer $k$ such
  that for every $\varepsilon$-ball with center in the state space contains a
  simulation starting point that the algorithm has chosen in the first $k$
  iterations. Then algorithm always finds an error trajectory.
\end{theorem}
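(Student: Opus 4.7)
The plan is to adapt the argument for Theorem~\ref{thm:1} by replacing the probabilistic density argument (random starting points almost surely hit every open set) with the deterministic density hypothesis given in the statement. The three hypotheses on $H$ still do the same work: continuous dependence provides a tube of controlled deviation around $E$, the local-search hypothesis finishes the job once we reach a path of cost at most $\varepsilon$, and the interior property of the unsafe endpoint lets us work with a single uniform deviation bound.

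First I would partition the error trajectory $E$ into finitely many checkpoints $e_0,e_1,\dots,e_N$, where $e_0$ is the initial state of $E$, $e_N$ is its unsafe endpoint, and each intermediate $e_j$ is the state reached by simulating forward from $e_{j-1}$ for time $T$; since $E$ has finite total length, $N$ is finite. I would perturb the checkpoints slightly if needed so that no $e_j$ lies on a reset boundary. For a target tolerance $\eta>0$, the continuous-dependence hypothesis in the tube yields $\delta(\eta)>0$ such that any length-$T$ forward simulation starting within $\delta$ of some $e_j$ remains within $\eta$ of the corresponding segment of $E$, traverses the same mode sequence, and ends within $\eta$ of $e_{j+1}$ in the same mode as $e_{j+1}$. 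By the density hypothesis on starting points, there is a finite $k$ such that by iteration $k$ the algorithm has placed a simulation starting point $p_j$ inside the $\delta$-ball around each $e_j$, for $j=0,\dots,N-1$; let $p_j'$ denote the endpoint of the simulation from $p_j$. The sequence $(p_0,p_0',p_1,p_1',\dots,p_{N-1},p_{N-1}')$ is then a path in $P$ with respect to the updated relation $\rightarrow$, since every $(p_j,p_j')$ has been added to $\rightarrow$ by forward simulation.

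The cost of this path is bounded by $d_I(p_0)+\sum_{j=0}^{N-2}d(p_j',p_{j+1})+d_U(p_{N-1}')$. We have $d_I(p_0)\le\delta$ because $e_0\in\Init$ and $p_0$ lies within $\delta$ of $e_0$; each intermediate term satisfies $d(p_j',p_{j+1})\le 2\eta$ because $p_j'$ and $p_{j+1}$ both lie within $\eta$ of $e_{j+1}$ in the same mode; and $d_U(p_{N-1}')=0$ provided $\eta$ is small enough, since $e_N$ lies in the interior of $\Unsafe$ and $p_{N-1}'$ is within $\eta$ of $e_N$. Choosing $\eta$ small enough that $\delta(\eta)+2N\eta\le\varepsilon$ forces the path cost below $\varepsilon$, so local optimization converges to an error trajectory and the algorithm terminates by iteration $k+1$ at the latest.

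The main obstacle is the mode-matching issue concealed inside the continuous-dependence step: by Definition~\ref{def:dist_states}, the distance $d$ is infinite between points in different modes that are not linked by $\rightarrow$, so the whole construction collapses unless every perturbed simulation follows exactly the same mode sequence as the corresponding segment of $E$. Handling this requires the checkpoints to be placed at generic points strictly away from guard boundaries, and requires the tube hypothesis to be strong enough to propagate this genericity through the intended mode transitions. This is implicit in the continuous-dependence hypothesis of the theorem, but it is the subtle point that must be argued carefully in a formal write-up.
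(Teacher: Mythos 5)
Your proposal follows essentially the same route as the paper: Theorem~\ref{thm:2} is obtained there by the very modification you describe, i.e.\ by taking the proof of Theorem~\ref{thm:1} and replacing each ``with probability $1$ the algorithm eventually starts a simulation in this ball'' step by the deterministic density hypothesis; your anchoring of segments to fixed checkpoints $e_j$ on $E$ instead of the paper's chained construction (each new start taken near the point of $E$ closest to the previous segment's endpoint) is an inessential variation, and the mode-matching caveat you raise is exactly what the tube/continuous-dependence assumption is there for.

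One step does need repair. For the last segment you claim that $p_{N-1}'$ lies within $\eta$ of $e_N$, but the simulation from $p_{N-1}$ has length $T$, while the remaining piece of $E$ beyond $e_{N-1}$ in general has length strictly less than $T$, so continuous dependence along $E$ gives no control over the endpoint of a full-length simulation --- it may overshoot the end of $E$ and land far from $\Unsafe$. The paper closes this gap using the assumption you never invoke, namely that a simulation terminates as soon as it hits an unsafe state: choosing $\eta<\mu$, where $\mu$ is the radius of a ball around the final point of $E$ contained in $\Unsafe$ (this is where the interior hypothesis is used), the tracked trajectory enters $\Unsafe$ at the appropriate intermediate time, the simulation stops there, and hence $d_U(p_{N-1}')=0$ exactly rather than merely small. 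With that correction your argument coincides with the paper's.
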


\comment{
Convergence in linear case with fixed endpoint.

mentioned which measure we use

refer to multi-start literature 

stronger condition for canceling simulations?}

\section{Computational Experiments}
\label{sec:results}

Recall that one of the main goals of our method was to handle the absence of an a-priori upper bound on the length of error trajectories. In order to study the cost of having to work without this information we compare our approach (that we will call ``unbounded method'') with another approach that also combines global with local search, but that does simulations of fixed length (we will call it ``bounded method''). The bounded method will also use derivative-based local optimization, but for initializing local optimization it randomly generates initial states in the mode containing $I$ and simulates for the time interval $[0, T]$. Whenever the resulting trajectory reaches the mode containing the set of unsafe states $U$, we take it as a starting trajectory for local search for an error trajectory. If we obtain an error trajectory then we stop. Otherwise we proceed until we generate a certain number of trajectories (we will denote this number by $M$). 

Note that any method that inspects the given hybrid system only up to a fixed time bound $T$, if $T$ is too small, it will not find any error trajectory at all. The bounded method that we use here, for $T$ too small, may never reach a mode containing $U$, preventing it from finding any error trajectory. Moreover, examples for which trajectories leading to the mode containing $U$ lead over a very small guard, become arbitrarily difficult for the bounded method. So we can already conclude now---without running any experiments---that the unbounded method is superior in certain cases.

Still we do some experiments with a widely known benchmark, the Navigation benchmark with 16 modes~\cite{Fehnker:04}. We consider the linear dynamics $\dot{x} = Ax - Bu(i, j)$, with $A$ and $B$ as usual for the navigation benchmark, and 
\begin{align*}
u(i, j) & = 
\begin{bmatrix}
\sin\left( \frac{\pi\: C(i, j)}{4} \right) \\
\cos\left( \frac{\pi\: C(i, j)}{4} \right)
\end{bmatrix} \ ,&
C & =
\begin{bmatrix}
4 & 3 & 3  & 4 \\
4 & 4 & 4 & 4 \\
4 & 6 & 6 & 4 \\
1 & 0 & 7 & 6
\end{bmatrix}
\ .
\end{align*}
Assume the sets of initial and unsafe states to be \emph{ellipsoids} such that their principal axes have length $0.2$, $0.2$, $2$ and $2$, however, $I$ is centred at $\begin{bmatrix}
0.5 & 3.5 & 0 & 0
\end{bmatrix}^T$ and $U$ is centred at $\begin{bmatrix}
3.5 & 1.5 & 0 & 0
\end{bmatrix}^T$. Our objective is to find any trajectory which starts in set $I$ and reaches set $U$.

For our experiments we use an instantiation of the unbounded method that fulfills the requirement of starting with a set $P$ that has a path of finite cost as follows: We initialize the set $P$  by putting a point on each boundary of two neighboring modes, simulating forward and backward from each such point (0.05 time units in each direction), and adding the endpoints to the set $P$. 

In the main algorithm, we add a random state to each mode (with velocities $x_3$ and $x_4$ ranging from $-1$ to $1$) and then we simulate forward and backward in time from such a state (0.5 time units in each direction). The extremities of the resulting error trajectory (its initial and end states) are stored in $P$ and used for obtaining a path of the minimal cost for local search. If local search returns an error trajectory, then we stop. As in the bounded method  we restrict computation, but this time, to add up to $M$ states to the set $P$. 

In all our experiments we do local search using the Scilab function for gradient-based numerical optimization, computing the gradient as described in Section~\ref{sec:local-optimization}. We weighted each distance between two consecutive segments by the weight $\omega = 500$ in order to prefer the continuity of resulting trajectories. We use the Scilab function \emph{rand} for generating random states. For reducing dependence of the result on the random number generator, we always carry out $100$ experiments, initializing the random number generator with a different seed (concretely, \emph{rand(``seed'', $i$)}, where $i = 1, \ldots 100$). In all our experiments we use the value $500$ for the constant $M$. The results are listed in Tab. \ref{Tab:comp1}.
The column ``successful falsification'' lists the number of experiments (from $100$) for which the method found an error trajectory. The column ''average total simulation time'' is the average of the length of all simulation done during a given experiment, but \emph{only} for those experiments that succeeded in finding an error trajectory.

\begin{table}
\begin{tabular}{l|c|c}
 & \ successful falsification\ & \ average total simulation time \\
 \hline 
 unbounded method & $99$ & $1937$ \\
 bounded method, $T=10$ & $85$ & $1260$ \\
 bounded method, $T=20$ & $89$ & $2935$
\end{tabular}
\caption{Computation Results}
\label{Tab:comp1}
\end{table}

The choices $T=10$ and $T=20$ that we used are big enough, so the bounded method does find error trajectories, but still the success rate is lower than with the unbounded method, that does not need any bound $T$ at all. In those cases where the bounded method actually finds an error trajectory, if the choice of $T$ is small but large enough to reach a mode containing $U$, it needs less simulation than the unbounded method. But very quickly, when not choosing $T$ small enough, also the cost of simulations increases beyond the unbounded method.

\Ext{
 \begin{figure}
 \centering
 \includegraphics[scale=1.0]{drift}
 \caption{A projection onto $x_1$--$x_2$ plane. We mark nodes starting from top-left, $(i, j) = (1, 1)$, to bottom-right which is $(i, j) = (4, 4)$. Arrows show direction of each drift.}
 \label{Fig:drift}
 \end{figure}
}

To sum up, the unbounded method significantly increases the chance of finding an error trajectory, and moreover, it also decreases the amount of simulation needed for that, except for cases where a very good bound on the error trajectory length is available.

\comment{starting from shortest path versus starting from all paths?

In multiple shooting we currently do not take the closest point to the set of unsafe states, but the final point of the simulation. It might make sense to change that.

Honza cancels simulations if they cross modes. This is not described currently.
}

\section{Related Work}
\label{sec:related-work}

Our algorithm can be viewed as an adaptation of the Best Start two-phase method for global optimization~\cite{Schoen:02} to our context. 

The falsification problem can also be viewed as a boundary value problem which is a classical topic in numerical mathematics~\cite{Ascher:95}. However, classical numerical methods assume a fixed final time, whereas we search for error trajectories of arbitrary length. Moreover, classical methods for boundary values problems are restricted to purely continuous systems and the formulation of boundary conditions as equalities. \Ext{Note that, unlike the continuous case, the trajectory of a hybrid system can reach a point $x(t)$ for which $f(x(t))\leq 0$ holds \emph{without} every reaching a point $x(t')$ for which $f(x(t))=0$ holds.}

Zuthsi and co-authors~\cite{Zutshi:13} present a method for falsification of hybrid systems that also uses multiple shooting based local search. However, the method assumes a given upper bound on the length of the error trajectory the method searches for. Moreover, their local search method always follows a 
 given sequence of modes and transitions. They propose to search for such a sequence using tools that compute abstractions of hybrid systems, or by random search. The form of the used trajectory segments is more restricted than in our method since trajectory segments always stay in one mode, and end in the guard leading to another mode.

Abbas~\cite{Abbas:11} show how to use local search for falsification of hybrid systems with affine dynamics. They propose to start the method from the result of global search algorithms~\cite{Nghiem:10}.

The usage of abstractions for guiding local search for error trajectories has been proposed earlier~\cite{Ratschan:09a}, in combination with the usage of derivative-free algorithms for local search. 

There is more related work for systems that---different from our case---allow input or have non-deterministic dynamics. In the completely discrete case this amounts to 
 finding shortest paths in graphs~\cite{Bertsekas:98}. We use shortest path algorithms as a sub-algorithm to find starting points for local search. Similar problems are studied in more structured domains by the field of planning~\cite{LaValle:06}, and in formal verification by directed model checking~\cite{Edelkamp:09}.

In the continuous case, the classical field studying algorithm for finding paths of dynamical systems that are in some sense optimal (e.g., as short as possible), is optimal control~\cite{Betts:98,Branicky:98a}. In recent years, also the field of planning has started to study continuous dynamical systems~\cite[Chapter IV: Planning Under Differential Constraints]{LaValle:06} from a different perspective. More recently, such techniques have also been applied to hybrid systems~\cite{Branicky:06,Dang:09,Plaku:09,Plaku:09a}. Planning-based techniques search globally, and do not require an upper bound on trajectory length, but they do not incorporate derivative-based local search. The only exception that we are aware of~\cite{Lamiraux:04} uses optimal control to the  result of planning in a purely sequential way, without any iteration between the two phases.




\Ext{Hamilton-Jacobi type methods, application of shortest path algorithm to Hamilton-Jacobi discretization~\cite{Tsitsiklis:95}}


\section{Conclusion}
\label{sec:conclusion}

We presented an algorithm for the falsification of hybrid system that combines
scalability due to local search with convergence due to global search. In future
work, we will improve the algorithm in analogy to advanced two-phase
methods~\cite{Schoen:02}, such as clustering methods that exploit the
regions of attraction to local optima of the used local search technique.

\bibliographystyle{abbrv}
\bibliography{sratscha}

\AL{
\appendix
\section{Appendix}

Proof of Lemma~\ref{lem:SSiM}:
\begin{proof}
The trivial case is when $r$ does not share a mode with any $p_i$, $i = 1, \ldots, n$. Then it follows from Definition~\ref{def:dist_states} that $d(r, p_i) = d(p_i, r) = \infty$ for all $i = 1, \ldots, n$. Since we assume $c(p_1, \ldots, p_n)$ is finite we get $d_I(r) = d_U(r) = \infty$ as well. Therefore, such a new state $r$ cannot improve the value of $c(p_1, \ldots, p_n)$.

In order to improve on the value of $c(p_1, \ldots, p_n)$ state $r$ must share a mode with some $p_i$'s in $(p_1, \ldots, p_n)$. These $p_i$'s form a subpath $(p_1^s, \ldots, p_k^s)$ and suppose states $p_i^s$ do not belong to modes where $\Init$ and $\Unsafe$ are. 

 First, assume we have a new path $(p_1^s, \ldots, p_{i}^s, r, p_{i+1}^s, \ldots, p_k^s)$. Then, it is clear that because of the triangle inequality $c(p_1^s, \ldots, p_{i}^s, r, p_{i+1}^s, \ldots, p_k^s) \geq c(p_1^s, \ldots, p_k^s)$. Hence we may want to substitute state $r$ for one or more states $p_i^s$. This is a more complicated, however, the same argument with the triangle inequality together with the minimality of $c(p_1, \ldots, p_n)$ apply. Note that, when we use $r$ instead of some $p_i^s$, we may remove a state which is in relation given by $\rightarrow$. If this is a case, the resulting cost gets larger, even infinite, and not lower as we intended.

Now, assume state $r$ is in a mode where either $\Init$ or $\Unsafe$ are. Since, this turns to be a symmetrical problem, we shall investigate only the case when $r$ and $\Init$ share the same mode. If $r \in \Init$, then we are done, since $d_I(p_1) = \min_{u \in \Init}d(u, p_1)$.  On the other hand, if $r \notin \Init$, we use the following argument. Since $\Init$ is closed, bounded and convex, there exists an unique element $p_1^\star \in \Init$ such that
\AL{\[
d(p_1^\star, p_1) = \min_{u \in \Init}d(u, p_1).
\]}
{$
d(p_1^\star, p_1) = \min_{u \in \Init}d(u, p_1).
$. }
Again, using the triangle inequality $d(p_1^\star, p_1) \leq d(p_1^\star, r) + d(r, p_1)$. In addition, according to the argument above, we cannot use $r$ instead of any state from path $(p_1, \ldots, p_n)$. \qed
\end{proof}

Proof of Theorem~\ref{thm:1}:
\begin{proof}
Let $E$ be the error trajectory, and let $L$ be its length wrt. time. Without loss of generality we assume that $L\geq T$.

We will work with a sequence of hybrid trajectories with $\Ceil{\frac{L}{T}}$ segments. Observe that---according to Definition~\ref{def:cost}---the allowed cost $\varepsilon$ from which convergence of local search is guaranteed, consists of $\Ceil{\frac{L}{T}}+1$ summands ($\Ceil{\frac{L}{T}}-1$ times the cost for joining segments, the cost to initial, and the cost to unsafe). We will distribute the allowed cost $\varepsilon$ equally among those, allowing error $\varepsilon':=\frac{\varepsilon}{\Ceil{\frac{L}{T}}}+1$ for each summand.  

We will now analyze the summands of the cost:
\begin{itemize}
\item The cost resulting from the starting point of the simulation: With probability $1$ the algorithm will eventually produce a segment 
  \begin{itemize}
  \item the starting point of the segment is  closer than $\varepsilon'$ from the starting point of $E$, and for which 
  \item all other points have distance less than $\frac{\varepsilon'}{2}$ from a point on $E$ (due to continuous dependency of trajectories on the initial values). 
  \end{itemize}
\item The cost between the end-point of any segment and starting point of the next segment: We first consider the first segment: Let $p$ be the point on $E$ that is closest to the end-point of the first segment. This distance is smaller than $\frac{\varepsilon'}{2}$ since the algorithm chose the starting point of the segment close enough to the starting point of $E$. With probability $1$ the algorithm will eventually start another simulation such that 
  \begin{itemize}
  \item the starting point of the segment is  closer than $\frac{\varepsilon'}{2}$ from $p$, and for which 
  \item all other points have distance less than $\frac{\varepsilon'}{2}$ from a point on $E$ (due to continuous dependency of trajectories on the initial values). 
  \end{itemize}
   We can continue with the same argument along $E$ until the final segment. 
\item The cost between the end-point of the final segment and the set of unsafe states: Let $p$ be the point on $E$ that is closest to the end-point of the last but one segment, as in the previous item. Since the final point of $E$ is in the interior of the set of unsafe points, there is a $\mu>0$ such that all points of distance smaller than $\mu$ from this final point is also unsafe.  Again, with probability $1$ we will eventually start another simulation whose distance from $E$ is smaller than $\mu$ which means that this simulation will hit an unsafe state.
\end{itemize}

The resulting path has cost not bigger than $\varepsilon$. Hence local search from this path will converge to an error trajectory. The path of minimal cost used by the algorithm might be a different one that this one. However, that path of minimal cost will again have cost not bigger than $\varepsilon$ which guarantees convergence. \qed
\end{proof}
}
{}

\comment{
\appendix

\section{Questions/Problems}

\begin{itemize}
\item How to learn from failing local optimization? find smallest sub-path that
  also fails, and learn something about it 
\item local optimization over modes (i.e., changing sequences of modes)
\item when and how should optimization change $P$, which initial $P$
\item adapt/improve optimal control algorithms for our purpose (e.g., stopping criteria)
\item how to exploit linearity? (see LQR trees)
\item exploit verification tools
\item Optimal anchor point for shooting methods? shooting that is
  forward/backward symmetric?
\end{itemize}
}

\end{document}